\documentclass[11pt]{article}
\usepackage[margin=1in]{geometry}
\usepackage{amsmath,amssymb,amsthm}
\usepackage{graphicx}
\usepackage{caption}
\usepackage{booktabs}
\usepackage{microtype}
\usepackage{cite}
\usepackage{xcolor}
\definecolor{linkcol}{RGB}{20,50,120}
\usepackage[colorlinks=true,linkcolor=linkcol,citecolor=linkcol,urlcolor=linkcol]{hyperref}

\graphicspath{{figures/}}

\theoremstyle{plain}
\newtheorem{theorem}{Theorem}

\newtheorem{proposition}{Proposition}
\theoremstyle{definition}

\newtheorem{example}{Example}
\newtheorem{remark}{Remark}

\newcommand{\dd}{\,d}
\newcommand{\Qbar}{\bar{Q}_n}

\newcommand{\Qinf}{\bar{Q}_{\infty}}
\newcommand{\gap}{\mathcal{G}}

\DeclareMathOperator*{\argmax}{arg\,max}

\begin{document}

\title{A Quantization Problem Posed by Adaptive Streaming}

\author{Yuriy A. Reznik\\[2pt]
\normalsize Massachusetts Institute of Technology, Cambridge, MA, USA\\
\normalsize \texttt{yreznik@mit.edu}}
\date{August 28, 2026}

\maketitle

\begin{abstract}
In adaptive bitrate (ABR) streaming, the delivery technology behind most Internet video, each title is encoded at several bitrates, forming an \emph{encoding ladder} of \emph{renditions}. Each client plays the highest rendition that its network bandwidth can sustain. We show that choosing the ladder is a problem of \emph{scalar quantization} of the bandwidth distribution. However, this quantization problem is of an unusual kind: the client's logic pins each quantization cell's reproduction value to the cell's \emph{left edge}, and the distortion measure is a one-sided quality loss rather than a squared error. The constraint reshapes the classical theory. The \emph{quality gap} of a ladder is the distance between its average delivered quality and the \emph{quality limit} that infinitely many renditions would attain. For optimal $n$-rendition ladders the gap $\gap_n^*$ decays as $\Theta(1/n)$, not the textbook $\Theta(1/n^2)$. Moreover, with large $n$, $n\,\gap_n^*\to C_w=\frac12(\int\!\sqrt{p\,Q'}\,\dd B)^2$, with the limit attained when the ladder's rates follow the density $\sqrt{p\,Q'}$, which replaces Panter--Dite's $p^{1/3}$; here $p$ is the bandwidth density and $Q$ the quality--rate curve of the content. The law yields closed-form design rules: quantile rung placement, the rendition count $n(\varepsilon)\approx C_w/\varepsilon$ needed to reach tolerance $\varepsilon$, and the economic ladder size $n^*=\sqrt{\lambda C_w/\kappa}$ when a rendition costs $\kappa$ to operate and the quality gap is priced at $\lambda$. The same analysis extends to the design of ladders employing different video resolutions, codecs, perceptual quality metrics, and the two-dimensional adaptation logic of modern web players.
\end{abstract}

\section{Introduction}
\label{sec:intro}

Most videos on the Internet---movies, sports, news, user-generated clips---are delivered using \emph{adaptive bitrate} (ABR) streaming protocols, such as HLS and DASH~\cite{Pantos2017,DASH2012}. The idea is simple. The capacity of the network path between a service and a viewer is unknown in advance and may be varying, so the service prepares \emph{several} encodings of the same content at different bitrates and lets each receiving device choose, moment by moment, the one its connection can sustain. The set of prepared encodings is the \emph{encoding ladder}; its entries are \emph{renditions}, informally \emph{rungs}.

The ladder is thus a finite set of operating points standing in for a continuum of network conditions, and someone must decide what this set should be. How many renditions? At which bitrates? In current practice, the question ``what is the best ladder?'' has never lacked answers---standing recommendations such as Apple's HLS authoring guidelines~\cite{AppleTN2224}, per-title and context-aware optimizers built into commercial encoders~\cite{Netflix2015,Chen2018,Reznik2020SMPTE}, and a growing body of work casting the search as a learning problem~\cite{Katsenou2021,Telili2025}. What none of this gives is an understanding of the problem itself: what mathematical structure it has, what an optimal ladder looks like, and how far from optimal any given design sits.

The objective of this paper is to develop such understanding. What we find is that the mathematics behind the ladder design problem is, recognizably, \emph{quantization}~\cite{GrayNeuhoff1998}. A ladder assigns to every possible network bandwidth $B$ a single delivered encoding; that is, it \emph{quantizes} the continuous random variable $B$ into $n$ representative operating points. Once this identification is made precise, the tools of quantization theory become available: Lloyd--Max-type optimality conditions~\cite{Lloyd1982,Max1960}, dynamic-programming design with global optimality guarantees, and, most usefully, high-resolution asymptotics that describe optimal ladders in closed form.

The identification comes with a twist, and the twist is what makes the problem interesting. A streaming client is \emph{conservative}: it plays the highest-bitrate rendition whose bitrate fits \emph{below} its available bandwidth. In quantization language, the reproduction value of each cell is not free to sit at the cell's centroid, as in the textbook problem; it is pinned to the cell's \emph{left edge}. This small change has large consequences. Quantization errors within a cell now all have one sign and never cancel, the within-cell loss is first-order in the cell width rather than second-order, and the whole high-resolution picture changes: the optimal quality gap decays as $\Theta(1/n)$ instead of the classical $\Theta(1/n^2)$, and the optimal density of the ladder's rates becomes $\sqrt{p\,Q'}$, replacing the Panter--Dite density $p^{1/3}$~\cite{Bennett1948,PanterDite1951}.

\subsection{Related prior work}
The models and the design objective come from our earlier papers~\cite{Reznik2018PV,Reznik2021PCS,Reznik2021DCC}. Ladder design as maximization of average delivered quality, solved by numerical search, was formulated in~\cite{Reznik2018PV}; the fitted models used here, together with exact computed designs, were developed in~\cite{Reznik2021DCC}. Those papers, however, stopped at formulation and numerical optimization. They did not take the next step: a rigorous analysis of the problem as a mathematical one. The present paper takes that step, and the tools for it already exist in two bodies of prior work.

The first body of work is classical quantization theory. The high-resolution analysis of nonuniform quantizers begins with Bennett in 1948~\cite{Bennett1948} and Panter and Dite in 1951~\cite{PanterDite1951}. Its rigorous form is Zador's limit theorem from 1963~\cite{Zador1982,BucklewWise1982}, extended to general source distributions by Graf and Luschgy in 2000~\cite{GrafLuschgy2000}. The optimality conditions and the descent algorithm they suggest are due to Lloyd in 1957~\cite{Lloyd1982} and Max in 1960~\cite{Max1960}. Dynamic programming for globally optimal scalar quantizers goes back to Bruce in 1965~\cite{Bruce1965}, with a matrix-search speedup by Wu in 1991~\cite{Wu1991}. A comprehensive account of the field is Gray and Neuhoff's survey~\cite{GrayNeuhoff1998}. This literature assumes free reproduction points, which settle in the interior of their cells with error mass on both sides, and so does not cover the edge-constrained class, where all error mass lies on one side.

The second is a set of earlier one-sided problem formulations. The \emph{assortment} (``standard sizes'') \emph{problem} of operations research is exactly edge-constrained quantization under one-sided linear loss. From it comes exact design by dynamic programming, used by Sadowski already in 1959~\cite{Sadowski1959}, with later finite-$n$ methods surveyed in~\cite{Pentico2008}. \emph{Approximation theory} holds the deterministic core of the problem: best one-sided approximation of an increasing function by an $n$-step staircase with free knots. Its ``balanced error'' asymptotics---the error decays as $\Theta(1/n)$, with knots placed to equalize the error of every step---were established by Burchard and McClure in the mid-1970s~\cite{Burchard1974,McClure1975} and are surveyed in~\cite{DeVore1998}. \emph{Discrete-rate link adaptation} performs the same edge-constrained selection over a fading channel, with average throughput as the objective and thresholds optimized numerically for small fixed rate sets, from the late 1990s on~\cite{GoldsmithChua1997,ChungGoldsmith2001}. None of these efforts, however, developed the asymptotics of the problem in the probabilistic form that ladder design needs.

\subsection{Contributions}
The problem class is thus old, appearing in several fields. What this paper adds is a connection between this class and adaptive streaming. We make four contributions.
\emph{(i) The identification} (Section~\ref{sec:quant}): ladder design is scalar quantization of the bandwidth distribution, with each cell's reproduction value pinned to the cell's left edge. The design therefore has $n$ degrees of freedom---the cell boundaries---not the $2n$ of the classical quantizer, where reproduction points are chosen separately. Over any $m$-point discretization of the rate axis---and rates are discrete in practice---the globally optimal ladder is computable by dynamic programming in $O(nm^2)$ operations. This places ABR ladder design inside the old class---and, in return, gives the class a fresh, measurement-rich application.
\emph{(ii) The high-resolution solution of the problem} (Proposition~\ref{thm:law}, Section~\ref{sec:hr}): writing $\gap_n$ for the quality gap of an $n$-rendition ladder, $n\cdot\inf\gap_n \to C_w = \frac12(\int\sqrt{p\,Q'}\,\dd B)^2$, achieved by rates with empirical density $\propto\sqrt{p\,Q'}$. This is a translation of a classical approximation-theory result to the present setting: we state it in Zador's probabilistic form, with the explicit constant, and give a short self-contained proof. Its hypotheses ask only for a density and a nondecreasing quality curve, so it applies verbatim to the relatives above.
\emph{(iii) Closed-form design rules} (Section~\ref{sec:rules}): quantile rung placement that lands within a few percent of optimal at practical sizes, the count $n(\varepsilon)\approx C_w/\varepsilon$ of renditions needed to come within $\varepsilon$ of the quality limit, and the economically optimal count $n^*=\sqrt{\lambda C_w/\kappa}$ when a rendition costs $\kappa$ to operate and the quality gap is priced at $\lambda$. All are verified against exact optima on models fitted to mass-scale streaming data.
\emph{(iv) Extensions} (Section~\ref{sec:ext}): the analysis transfers, with structure intact and constants recomputed, to ladders that also choose encoding resolutions, to perceptual quality models, and to the two-dimensional adaptation of modern web players, where an exact reduction collapses the enlarged design problem into a chain of one-dimensional ones.

\section{System and Models}
\label{sec:model}

\begin{figure}[t]
\centering
\includegraphics[width=0.97\textwidth]{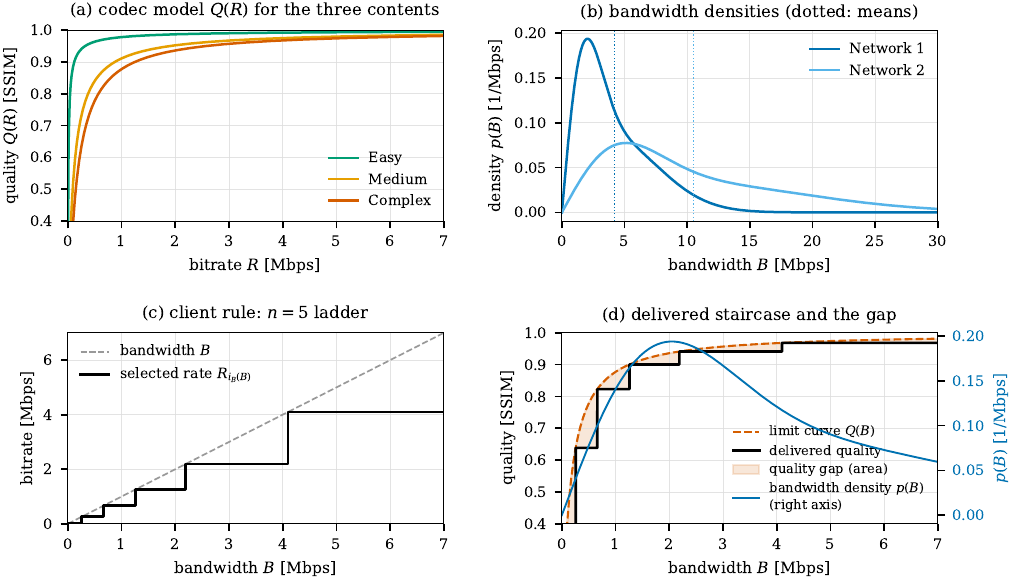}
\caption{The three system models, in the fitted forms of Table~\ref{tab:models}. (a) Quality--rate model for the three test sequences at 1080p. (b) The two bandwidth densities. (c) The client rule~\eqref{eq:client} for a five-point ladder (rates $257$, $659$, $1257$, $2187$, $4097$ kbps); below $R_1$ the client buffers. (d) The delivered-quality staircase pressed against the curve $Q(B)$; the shaded area, weighted by the bandwidth density (right axis), is the quality gap~\eqref{eq:gap}.}
\label{fig:models}
\end{figure}

\subsection{The mechanics of ABR streaming}

A video asset is transcoded into $n$ independent encodings (renditions), each characterized by a bitrate $R_i$ (kbps) and, in general, an encoding resolution. Until Section~\ref{sec:ext} we will assume that all renditions are encoded at a single resolution (e.g., 1080 lines), so a ladder is a list of bitrates $0<R_1<\dots<R_n$. Each rendition is cut into short, aligned, independently decodable \emph{segments}, so a client may fetch consecutive segments from different renditions. The client estimates its end-to-end throughput $B$ by timing its own segment downloads and requests, every few seconds, the next segment from the rendition its current estimate allows. This throughput fluctuates, so the client protects playback: it keeps a buffer of downloaded segments and selects the next ones conservatively.

The operator controls the ladder; it does not control the clients, the network, or the audience, but it observes all three through playback analytics. The design problem thus has a classic shape: choose a finite codebook (the ladder) to maximize an objective (average delivered quality) over a source distribution (bandwidth), under a fixed usage rule (the client logic).

\subsection{The three models}
\label{sec:models}

Analysis requires three ingredients: a model of the encodings achievable on the asset, a model of client selection, and a model of the network bandwidth distribution. We define them generically, as functions with only the key properties required for the analysis. For numerical examples, we use more specific models proposed in~\cite{Reznik2021DCC,Reznik2021PCS}; Fig.~\ref{fig:models} and Table~\ref{tab:models} summarize their forms, parameters, and origin.

\begin{table}[!t]
\centering
\caption{The models used in Fig.~\ref{fig:models} and all numerical examples.}
\label{tab:models}
\footnotesize
\begin{tabular}{@{}p{0.155\textwidth}p{0.45\textwidth}p{0.33\textwidth}@{}}
\toprule
Model & Form used in the examples & Origin and fitting \\
\midrule
Quality--rate $Q(R)$ & $Q(R)=\bigl(1+(R/\rho_c)^{-\gamma_c}\bigr)^{-1/\gamma_c}$, with quality measured against the source by the structural-similarity index SSIM~\cite{WangBovik2004}. Three 1080p test sequences: ``Easy'' (animated cartoon; $\rho_c=4.2$, $\gamma_c=0.746$), ``Medium'' (natural scene; $\rho_c=84.6$, $\gamma_c=0.959$), ``Complex'' (crowd run; $\rho_c=154.2$, $\gamma_c=1.034$); $R$ in kbps. & Fitted through a few dozen probe encodes per sequence; this model was shown to predict SSIM of H.264-encoded video reasonably well~\cite{Reznik2021DCC}. \\[2pt]
Bandwidth density $p(B)$ & Mixture of two Rayleigh densities, $p(B)=\omega f(B\,|\,\sigma_1)+(1-\omega)f(B\,|\,\sigma_2)$ with $f(B\,|\,\sigma)=(B/\sigma^2)\,e^{-B^2/(2\sigma^2)}$ and $\omega=0.4287$. ``Network 1'': $\sigma_1=1802$, $\sigma_2=4499$; ``Network 2'': Network 1 scaled $2.5\times$, i.e., $\sigma_1=4506$, $\sigma_2=11248$; $B$ in kbps. & Fitted in~\cite{Reznik2021DCC} to observed CDFs of TCP traffic bandwidth in wireless networks under different loads. \\
\bottomrule
\end{tabular}
\end{table}

\subsubsection{Codec and content}
We assume that the encoding quality $Q(R)$, produced by an encoder operating on a given video sequence at a target bitrate $R$, can be modeled as a function
\begin{equation}
\begin{gathered}
Q(R)\in[0,1],\quad R\in\mathbb{R}_+,\quad \text{$C^1$ and strictly increasing in $R$},\\
Q(0)=0,\qquad Q(\infty)=1.
\end{gathered}
\label{eq:codec}
\end{equation}
Specific examples of such functions, obtained by fitting experimentally measured data, are shown in Fig.~\ref{fig:models}(a) and explained in Table~\ref{tab:models}.

\subsubsection{The client}
We model the client's time-averaged behavior by a deterministic threshold rule: at bandwidth $B$, a client selects rendition
\begin{equation}
i_B(B) = j \;\Longleftrightarrow\; R_j \le B < R_{j+1},
\label{eq:client}
\end{equation}
for $j=0,1,\dots,n$, with the conventions $R_0=0$ and $R_{n+1}=\infty$: the highest-rate rendition whose bitrate does not exceed $B$. The index $j=0$ denotes \emph{buffering}: below $R_1$ no rendition is sustainable, and the delivered quality is $Q(R_0)=0$. An example showing the application of this rule with a five-point ladder is given in Fig.~\ref{fig:models}(c).

\subsubsection{The network}
We model the bandwidth $B$ by a single continuous distribution on $\mathbb{R}_+$, with a density
\begin{equation}
p(B)\ge0,\qquad \int_0^\infty p(B)\dd B=1,
\label{eq:net}
\end{equation}
and a distribution function $F$. In practice, operators can measure such a distribution by aggregating players' reported download rates. Two examples of such distributions, used in the numerical experiments, are shown in Fig.~\ref{fig:models}(b), with their origin explained in Table~\ref{tab:models}.

\subsection{Average quality, the quality limit, and the gap}

Under~\eqref{eq:client}, rendition $i$ is delivered when $B\in[R_i,R_{i+1})$, so its \emph{load} is $p_i = F(R_{i+1})-F(R_i)$; the remaining mass $p_0=F(R_1)$ is the probability of buffering. The vector $(p_i)$ is the \emph{rendition load distribution}. The central performance measure is the \emph{average delivered quality}, i.e., the average quality delivered by a system using an $n$-point ladder $R_1,\dots,R_n$:
\begin{equation}
\Qbar(R_1,\dots,R_n) = \int_0^\infty\!\! p(B)\, Q\bigl(R_{i_B(B)}\bigr)\dd B
= \sum_{i=1}^n p_i\, Q(R_i).
\label{eq:Qbar}
\end{equation}

As the ladder grows dense, delivered quality approaches $Q(B)$ pointwise and $\Qbar$ approaches the \emph{quality limit}
\begin{equation}
\Qinf \;=\; \int_0^\infty p(B)\,Q(B)\dd B.
\label{eq:Qinf}
\end{equation}
This limit depends on content and network, but not on the ladder. What design can influence is the \emph{quality gap}
\begin{equation}
\gap_n \;=\; \Qinf - \Qbar \;=\; \int_0^\infty\! p(B)\bigl[ Q(B) - Q(R_{i_B(B)})\bigr]\dd B,
\label{eq:gap}
\end{equation}
the $p$-weighted area between the curve $Q(B)$ and the staircase of delivered quality $Q(R_{i_B(B)})$ inscribed under it.

As an example of all these terms at work, consider the models and the five-point adaptation ladder presented in Fig.~\ref{fig:models}. The computation of the quality gap is illustrated in Fig.~\ref{fig:models}(d). For this ladder, the average delivered quality is $\bar{Q}_5=0.9223$ SSIM, the quality limit is $\Qinf=0.9460$, and the quality gap is $\gap_5=0.024$.

\section{Ladder Design is Edge-Constrained Quantization}
\label{sec:quant}

The ladder design problem is to find a set of rates $\{R_1^*,\dots,R_n^*\}$ such that the average quality delivered by the streaming system is maximal:
\begin{equation}
\{R_1^*,\dots,R_n^*\} \;=\; \argmax_{\substack{R_1<\dots<R_n\\[1pt] R_i\in[R_{\min},R_{\max}]}} \;\Qbar(R_1,\dots,R_n).
\label{eq:design}
\end{equation}
The objective $\Qbar$ is the average delivered quality~\eqref{eq:Qbar}. The first condition under the $\argmax$ orders the rates; the second confines them to a range $[R_{\min},R_{\max}]$ of practically usable rates. By~\eqref{eq:gap}, maximizing $\Qbar$ is the same as minimizing the quality gap $\gap_n$---in quantization terms, minimizing the average distortion. As posed, the problem is nonconvex and nonsmooth (each $R_i$ appears both inside $Q$ and inside integration limits), which is why it was previously attacked by generic numerical search~\cite{Reznik2018PV,Reznik2021PCS,Reznik2021DCC}. Its actual structure is far better than generic.

\subsection{The identification, and two amendments}

Table~\ref{tab:dict} sets out the mapping onto a scalar quantizer: bandwidth is the source, the client rule the partition, delivered quality the reproduction, the quality gap the average distortion, and the ladder the codebook. This is a quantization problem, but not the textbook one: it differs in two respects, and both differences drive all that follows.

\begin{table}[t]
\centering
\caption{The quantization--streaming dictionary.}
\label{tab:dict}
\small
\begin{tabular}{ll}
\toprule
Quantization & ABR streaming \\
\midrule
source variable & network bandwidth $B \sim p$ \\
codebook of size $n$ & encoding ladder $(R_i)_{i=1}^n$ \\
encoder / partition & client rule $i_B(\cdot)$; cell $i$ is $[R_i, R_{i+1})$ \\
reproduction value & delivered quality $Q(R_i)$ \\
nearest-neighbor rule & conservative rule: reproduce at left edge \\
distortion measure & quality loss $Q(B)-Q(R_{i_B(B)})$, one-sided \\
average distortion & quality gap $\gap_n = \Qinf - \Qbar$ \\
Lloyd conditions & fused condition~\eqref{eq:lloyd} \\
\bottomrule
\end{tabular}
\end{table}

\emph{Amendment 1: edge-constrained reproduction.} Classically, cell boundaries $[t_{i-1}, t_i)$ and reproduction points are separate objects, coupled at the optimum by the centroid and midpoint conditions of Lloyd and Max~\cite{Lloyd1982,Max1960}---$2n$ degrees of freedom in all. Here the client's conservatism \emph{fuses} them: cell $i$ is $[R_i, R_{i+1})$, so the reproduction value \emph{is} the cell's left edge:
\begin{equation}
R_i \;=\; t_{i-1}.
\label{eq:edge}
\end{equation}
This holds for every cell, the \emph{buffering cell} $[R_0,R_1)$ included. The design has $n$ degrees of freedom---the rates---not $2n$, and within-cell errors all have one sign, so the loss grows linearly with cell width instead of quadratically. Figure~\ref{fig:binning} contrasts the two situations.

\emph{Amendment 2: quality loss, not squared error.} The distortion measure is not a squared distance between $B$ and its reproduction but the one-sided quality loss $Q(B)-Q(R_i)$ of Table~\ref{tab:dict}: errors are measured after mapping bandwidths through the quality curve. The local cost of such errors is easy to quantify. Serving sessions near a bandwidth $B$ at a rate $\delta$ below it loses quality $Q(B)-Q(B-\delta)\approx Q'(B)\,\delta$ per session, where $Q'$ is the first derivative of $Q$, and such sessions occur with probability density $p(B)$. Their product, $w(B)=p(B)\,Q'(B)$, is therefore the \emph{loss density}: the rate at which average quality is lost, per unit of under-delivery, around $B$. Quantization is costly where $w$ is large and nearly free where it vanishes.

\begin{figure}[t]
\centering
\includegraphics[width=0.94\textwidth]{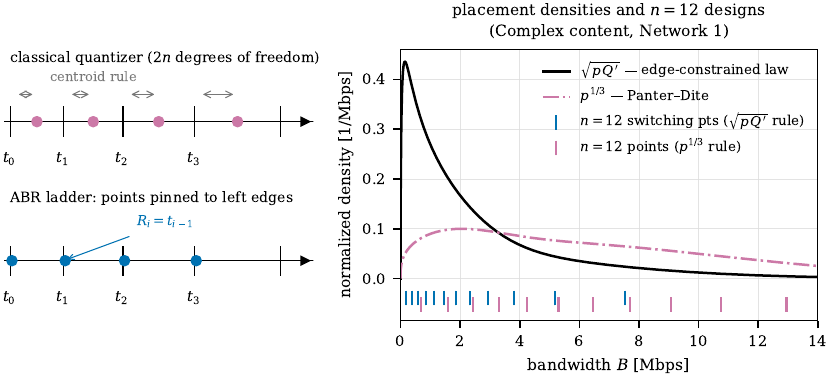}
\caption{Binning in classical quantizers versus ABR ladders. Left: classically, cell boundaries $t_i$ and reproduction points are separate objects; in the ABR ladder each reproduction value is pinned to its cell's \emph{left edge}. Right: on the fitted models (Complex content, Network~1), the classical density $p^{1/3}$ (dash-dotted) follows the bandwidth distribution alone, while the edge-constrained $\sqrt{p\,Q'}$ (solid) discounts saturated bandwidths, skewing the $n=12$ design (tick marks) toward low-to-middle rates.}
\label{fig:binning}
\end{figure}

\subsection{Optimality conditions and exact design}
\label{sec:dp}

Written through the loads, the objective of~\eqref{eq:design} is
\begin{equation}
\Qbar(R_1,\dots,R_n) = \sum_{i=1}^n Q(R_i)\bigl[F(R_{i+1})-F(R_i)\bigr],
\label{eq:Qbars}
\end{equation}
with the conventions of~\eqref{eq:client}, $R_0=0$ and $R_{n+1}=\infty$. Every design variable appears twice: $R_i$ sets the quality of its own cell \emph{and} the boundary shared with the cell below. Setting $\partial\Qbar/\partial R_i=0$:
\begin{equation}
Q'(R_i)\bigl[F(R_{i+1}) - F(R_i)\bigr] \;=\; p(R_i)\,\bigl[ Q(R_i) - Q(R_{i-1}) \bigr].
\label{eq:lloyd}
\end{equation}
This is the Lloyd pair fused into one condition by the edge constraint, and it holds for \emph{every} $i=1,\dots,n$. Its reading: raising $R_i$ gains quality at rate $Q'(R_i)$ for cell $i$, and pushes traffic of mass $p(R_i)\,dR$ down one rung, losing the step $Q(R_i)-Q(R_{i-1})$; at an optimum the two balance. For $i=1$ the step is the whole $Q(R_1)$, so the lowest rate is an interior optimum, needing no external cap.

Condition~\eqref{eq:lloyd}, taken over $i=1,\dots,n$, is a system of $n$ equations in $n$ unknowns, and it can be solved numerically. It is not sufficient, however: since $\Qbar$ is nonconvex, the system generally has multiple solutions---local optima and saddle points among them---and satisfying it does not certify a global optimum. We therefore compute exact solutions of~\eqref{eq:design} differently. Restricting the rates to an $m$-point lattice makes the search space finite, and because the objective~\eqref{eq:Qbars} couples only consecutive pairs $(R_i,R_{i+1})$, the optimum over the lattice can be found by dynamic programming, a technique known in quantizer design since Bruce~\cite{Bruce1965}. Its complexity is $O(nm^2)$ operations, against the $\binom{m}{n}\sim m^n/n!$ placements (for $m\gg n$) that a brute-force search over the same lattice would have to examine. We use this method to compute all optimal ladders in the numerical examples that follow.

\section{The High-Resolution Solution}
\label{sec:hr}

We now turn to the \emph{high-resolution} regime. The assumption is that the number of renditions $n$ is large, so that every cell $[R_i,R_{i+1})$ is narrow and $p$ and $Q'$ are nearly constant across it; the box constraints of~\eqref{eq:design} are dropped, so the rates may lie anywhere in $\mathbb{R}_+$. From this assumption we derive the rate at which the optimal gap decays with $n$, the constant that governs the decay, and the placement density that the optimal rates follow.

\subsection{The mechanism}

Everything rests on an exact identity for the loss of one cell. Writing $Q(B)-Q(R_i)=\int_{R_i}^B Q'(v)\dd v$ and exchanging integrals,
\begin{equation}
L_i = \!\int_{R_i}^{R_{i+1}}\!\!\! p(B)\bigl[Q(B)-Q(R_i)\bigr]\dd B
= \iint_{\mathcal{T}_i} p(B)\, Q'(v)\dd v \dd B,
\label{eq:cellloss}
\end{equation}
the integral of $p\cdot Q'$ over the \emph{triangle} $\mathcal{T}_i=\{R_i\le v\le B\le R_{i+1}\}$ of side $\Delta_i=R_{i+1}-R_i$. If $p$ and $Q'$ are roughly constant across a small cell,
\begin{equation}
L_i = \tfrac12\, w(R_i)\,\Delta_i^2 + o(\Delta_i^2),
\label{eq:cellloss2}
\end{equation}
where $w(R_i)=p(R_i)\,Q'(R_i)$. The classical centroid quantizer's per-cell distortion, by contrast, is $p\,\Delta^3/12$: one power of $\Delta$ is lost because within-cell errors share a sign.

The Bennett-style heuristic is now immediate. Fix $b$ with $1-F(b)$ negligible and restrict the gap~\eqref{eq:gap} to $[0,b]$, at the price of at most $1-F(b)$ error. Describe a large-$n$ ladder by a \emph{point density} $\lambda\ge0$, $\int_0^b\lambda(B)\dd B=1$, so that cells near a point $B$ have width $\approx1/(n\lambda(B))$. Summing~\eqref{eq:cellloss2},
\[
\gap_n \approx \frac{1}{2n}\int_0^b \frac{w(B)}{\lambda(B)}\dd B,
\]
and minimizing $\int w/\lambda$ subject to $\int\lambda=1$ is a Cauchy--Schwarz exercise: the optimum is $\lambda^*\propto\sqrt{w}$, with value $(\int\sqrt{w})^2$. Every claim of this section is already visible: the $1/n$ decay, the $\sqrt{p\,Q'}$ placement rule, and the constant.

\subsection{The $\Theta(1/n)$ law}

The core of the following result is not new: in deterministic form, it is known in approximation theory~\cite{Burchard1974,McClure1975,DeVore1998}. We state and prove it in the probabilistic form that the design problem needs.

\begin{proposition}[$\Theta(1/n)$ law and $\sqrt{w}$ companding; after~\cite{Burchard1974,McClure1975}]
\label{thm:law}
Let $p$ be a probability density and $Q$ nondecreasing, both $C^1$ on $[0,b]$, and let $w=p\,Q'$. Then
\begin{equation}
\lim_{n\to\infty} n\cdot \inf_{R}\,\gap_n(R) \;=\; C_w \;=\; \frac12\Bigl( \int_0^b \sqrt{w(B)}\dd B \Bigr)^{\!2},
\label{eq:law}
\end{equation}
where the infimum is over sequences $0=R_0<R_1<\dots<R_n\le b$ with $n$ free points, and the limit is achieved by sequences whose points have empirical density proportional to $\sqrt{w}$.
\end{proposition}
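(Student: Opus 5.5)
Set $W=\int_0^b\sqrt{w}\dd B$; the gap is taken over $[0,b]$, i.e.\ $\gap_n(R)=\sum_{i=0}^{n}L_i$ with the last cell $[R_n,b]$. The plan proves two matching bounds, $\limsup_n n\inf_R\gap_n\le C_w$ by construction and $\liminf_n n\inf_R\gap_n\ge C_w$ for every ladder, both resting on the exact triangle identity~\eqref{eq:cellloss}. The key uniform estimate comes first. Since $p$ and $Q'$ are continuous on the compact $[0,b]$, they are uniformly continuous. Hence for any cell $[a,a+\Delta]\subset[0,b]$,
\[
L=\tfrac12\,\Delta^2\bigl(w(a)+\eta\bigr),\qquad |\eta|\le\omega(\Delta),
\]
where $\omega$ is a modulus of continuity of $(B,v)\mapsto p(B)Q'(v)$ on the triangle, so $\omega(\Delta)\to0$ as $\Delta\to0$. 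This is~\eqref{eq:cellloss2} made uniform, and the proof should use no more than this.

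\emph{Upper bound (companding).} First take $w>0$ on $[0,b]$. Let $G(x)=W^{-1}\int_0^x\sqrt{w}$ and place $R_i=G^{-1}(i/(n+1))$, $i=0,\dots,n$, with $R_{n+1}=b$. These points have empirical density $\sqrt{w}/W$. By the mean value theorem, $\Delta_i=\frac{W}{(n+1)\sqrt{w(\xi_i)}}$ for some $\xi_i$ in the cell, and all $\Delta_i=O(1/n)$. Then
\[
n\gap_n=\frac n2\sum_i\Delta_i^2\,w(\xi_i)+o(1)=\frac{nW^2}{2(n+1)^2}(n+1)+o(1)\to C_w .
\]
Zeros of $w$ need a regularization. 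Apply the construction to $w_\epsilon=w+\epsilon$, which has a positive minimum and hence mesh $O(1/n)$. Since $L_i\le\frac12\Delta_i^2(w_\epsilon(a)+\omega)$ still holds, this gives $\limsup n\gap_n\le\frac12(\int\sqrt{w+\epsilon})^2$. Letting $\epsilon\downarrow0$ gives $C_w$. The same computation shows that any ladder with limiting density $\propto\sqrt{w}$ attains the limit.

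\emph{Lower bound.} Take optimal, or near-optimal, ladders $R^{(n)}$ and write $\Delta_i$ for their cells. Cauchy--Schwarz applied to the $n+1$ cells gives
\[
\sum_i\Delta_i\sqrt{m_i}\le\Bigl(\sum_i\Delta_i^2m_i\Bigr)^{1/2}(n+1)^{1/2},\qquad m_i=\min_{\text{cell }i}w,
\]
hence
\[
\gap_n\ \ge\ \tfrac12\sum_i\Delta_i^2 m_i\ \ge\ \frac{\bigl(\sum_i\Delta_i\sqrt{m_i}\bigr)^2}{2(n+1)},
\]
using the elementary bound $L_i\ge\frac12 m_i\Delta_i^2$, which follows directly from~\eqref{eq:cellloss} if $m_i$ is read as the minimum of $p(B)Q'(v)$ on the triangle. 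The sum $\sum\Delta_i\sqrt{m_i}$ is a lower Riemann sum for $\int\sqrt{w}$, so it converges to $W$ provided the mesh $\max_i\Delta_i\to0$. The bound then gives $\liminf n\gap_n\ge C_w$.

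\emph{Main obstacle: the mesh of optimal ladders.} It must be shown that optimal ladders have vanishing mesh, at least where $w$ is not small. Where $w=0$ on an interval, cells may legitimately stay wide, but those cells contribute almost nothing to either side. The plan is a splitting argument. Fix $\delta>0$ and bound $\sqrt{m_i}$ from below by $\sqrt{m_i}\,\mathbf 1[\Delta_i\le\delta]$. A cell with $\Delta_i>\delta$ on which $\min w\ge\eta>0$ costs at least $\frac12\eta\delta^2$, a constant. That contradicts $\gap_n=O(1/n)$, which the upper bound supplies. So for large $n$ every long cell lies in the set where $w<\eta$ somewhere nearby. Such cells lose at most $O(\delta\sqrt{\eta})\cdot(\text{number of long cells}\le b/\delta)$ of the Riemann sum, which is $O(\sqrt{\eta}\,b)$. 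The short cells give a Riemann sum within $o_\delta(1)$ of $\int\sqrt{w}$, up to this loss. Letting $n\to\infty$, then $\delta\to0$, then $\eta\to0$, and using uniform continuity of $\sqrt{w}$, closes the argument. The order of these limits, and the use of $\min w$ over a cell rather than a point value, is where I expect the care to be needed.
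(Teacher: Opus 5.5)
Your argument has the same skeleton as the paper's. The upper bound uses the $\sqrt{w}$-quantile (companding) ladder. The lower bound uses the triangle estimate $L_i\ge\frac12 m_i\Delta_i^2$, Cauchy--Schwarz over the $n+1$ cells, and a Riemann-sum limit justified by a wide-cell dichotomy. You depart from the paper only where $w$ has zeros, and there your version is the sounder one.

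\emph{Upper bound.} The paper runs the construction on $\{w\ge\epsilon\}$ and spends $o(n)$ points on the complement. But $k$ points on a stretch of length $\ell$ where $w$ is comparable to $\epsilon$ leave a loss of order $\epsilon\ell^2/k$. That is not $O(\epsilon/n)$ when $k=o(n)$. Your regularization $w+\epsilon$ automatically allots a fixed fraction of the points to such stretches, and gives $\limsup_n n\inf_R\gap_n\le\frac12\bigl(\int_0^b\sqrt{w+\epsilon}\bigr)^2$ at once.

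\emph{Lower bound.} The paper asserts that any cell wider than $\eta$ costs a fixed $c(\eta)$. That holds when $\{w=0\}$ has empty interior. It fails when $w$ vanishes on an interval, e.g.\ $Q$ constant beyond a saturation rate, which the hypotheses allow; near-optimal ladders then do keep a wide cell. Your short/long split tolerates long cells provided $w$ is small on them, and that is what the statement needs in this generality.

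One step needs tightening. Knowing that no long cell can have $\min_{\text{cell}} w\ge\eta$ only tells you that each long cell contains \emph{some} point where $w<\eta$. It does not tell you the cell lies in $\{w<\eta\}$, and the latter is what bounds its contribution to $W$. Apply the constant-cost argument to a sub-interval instead:
\begin{itemize}
\item Pick $r$ with $|w(x)-w(y)|\le\eta$ whenever $|x-y|\le r$, and set $\delta'=\min(r,\delta)$.
\item If a cell of width $>\delta$ contains $x$ with $w(x)\ge2\eta$, it contains a subinterval of length $\delta'$ within distance $\delta'$ of $x$. On it $w\ge\eta$, hence $p\ge\eta/\|Q'\|_\infty$ and $Q'\ge\eta/\|p\|_\infty$.
\item The sub-triangle over that interval then gives $L_i\ge\eta^2\delta'^2/(2\|p\|_\infty\|Q'\|_\infty)$. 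The constant is not $\frac12\eta\delta^2$, since the integrand is $p(B)Q'(v)$ evaluated at two different points, but any fixed positive constant suffices.
\end{itemize}
So for large $n$ every long cell has $w<2\eta$ throughout. The long cells then remove at most $\sqrt{2\eta}\sum_{\text{long}}\Delta_i\le\sqrt{2\eta}\,b$ from $W$. Your ``$O(\delta\sqrt\eta)$ per long cell'' bookkeeping is off, since long cells can be much longer than $\delta$, but only the total matters. With $\sqrt{m_i}\ge\sqrt{w(x)}-\sqrt{\omega(\delta)}$ on short cells you obtain $\liminf_n n\inf_R\gap_n\ge\frac12\bigl(W-\sqrt{2\eta}\,b-\sqrt{\omega(\delta)}\,b\bigr)_+^2$. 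The limits in $\delta$ and $\eta$ can then be taken in either order.

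Finally, qualify the aside that \emph{any} ladder with limiting density $\propto\sqrt{w}$ attains the limit. Convergence of the empirical measure does not control local spacings. A ladder whose points come in nearly coincident pairs has the right density but roughly twice the gap: about $1/n$ instead of $1/(2n)$ in Example~\ref{ex:toy}. Your computation proves the claim for companding ladders, where $\sqrt{w(\xi_i)}\,\Delta_i=W(1+o(1))/(n+1)$ uniformly, and that is all the proposition needs.
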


\begin{proof}
Write $W=\int_0^b\sqrt{w(B)}\dd B$, so $C_w=W^2/2$.

\emph{Upper bound.} Assume first $w>0$ on $[0,b]$. Define the \emph{companding map} $\Lambda(x)=\frac1W\int_0^x\sqrt{w(B)}\dd B$, continuous and strictly increasing onto $[0,1]$, and place
\begin{equation}
R_i=\Lambda^{-1}\bigl( i/(n{+}1) \bigr), \qquad i=1,\dots,n.
\label{eq:quantile}
\end{equation}
Each of the $n+1$ cells then carries equal $\sqrt{w}$-mass: $\int_{R_i}^{R_{i+1}}\sqrt{w}=W/(n{+}1)$ (with $R_{n+1}=b$). Since $\sqrt{w}\ge\mu>0$, all widths obey $\Delta_i\le W/((n{+}1)\mu)$, shrinking uniformly. The expansion~\eqref{eq:cellloss2} holds uniformly over cells (its $o(\Delta^2)$ is controlled by uniform continuity of $p,Q'$ on $[0,b]$), and equal mass gives $\sqrt{w(R_i)}\,\Delta_i = W/(n{+}1) + o(1/n)$ uniformly, whence $L_i = \frac12 (W/(n{+}1))^2 (1+o(1))$ and, summing the $n+1$ cells, $\limsup_n n\inf_R \gap_n \le C_w$. If $w$ vanishes somewhere, apply the construction on $\{w\ge\epsilon\}$ with $n-o(n)$ points and spend $o(n)$ points keeping widths bounded on the complement, where by~\eqref{eq:cellloss2} the loss is $O(\epsilon)$ per unit length; let $n\to\infty$, then $\epsilon\to0$.

\emph{Lower bound.} Fix a sequence $R$ and let $\Delta_i=R_{i+1}-R_i$. From the exact identity~\eqref{eq:cellloss}, $L_i \ge \frac12\,\underline{w}_i\,\Delta_i^2$ with $\underline{w}_i = \min_{[R_i,R_{i+1}]}p \cdot \min_{[R_i,R_{i+1}]}Q'$. Fix $\eta>0$. If some cell is wider than $\eta$, then $\gap_n(R)\ge c(\eta)>0$ for all $n$ (a wide cell contains a fixed sub-triangle of~\eqref{eq:cellloss} meeting $\{w>0\}$), so such sequences cannot approach the infimum. Otherwise, all $\Delta_i\le\eta$; by uniform continuity there is a modulus $\omega(\eta)\to0$ with $\underline{w}_i \ge w(R_i)-\omega(\eta)$, and by Cauchy--Schwarz,
\[
(n{+}1) \sum_i \underline{w}_i \Delta_i^2 \;\ge\; \Bigl( \sum_i \sqrt{\underline{w}_i}\,\Delta_i \Bigr)^{\!2} \;\to\; \Bigl( \int_0^b\!\sqrt{w} \Bigr)^{\!2}
\]
as $\eta\to0$, since $\sum_i\sqrt{w(R_i)}\,\Delta_i$ is a Riemann sum, over the $n+1$ cells, of mesh at most $\eta$. Hence, with $n/(n{+}1)\to1$, $\liminf_n n\inf_R\gap_n \ge W^2/2$.
\end{proof}

\begin{remark}[Relation to Zador's theorem]
The Panter--Dite formula is itself a heuristic of exactly the kind sketched above. Its rigorous form is Zador's theorem~\cite{Zador1982,BucklewWise1982}, established for general absolutely continuous sources by Graf and Luschgy~\cite{GrafLuschgy2000}. Proposition~\ref{thm:law} is the Zador-form statement for the edge-constrained class, under $C^1$ assumptions on a compact interval. Extending it to Graf--Luschgy generality---notably to bandwidth distributions with atoms, as produced by rate-capped service tiers---remains open.
\end{remark}

\begin{example}[Uniform source, linear quality]
\label{ex:toy}
Let $p=1$ and $Q(B)=B$ on $[0,1]$, with $R_0=0$ pinned. Then $\gap_n = \frac12\sum_i \Delta_i^2$ under $\sum_i\Delta_i=1$, minimized by giving the $n+1$ cells equal widths: $\gap_n^*=\frac{1}{2(n+1)}$, against $1/(12n^2)$ for classical quantization of the same source (Fig.~\ref{fig:toy}). Here $w\equiv1$, so $C_w=\frac12$, and $(n{+}1)\,\gap_n^*$ equals $C_w$ exactly at every finite $n$: the law holds with $n+1$---the number of cells---in place of $n$, an off-by-one that washes out in the limit.
\end{example}

\begin{figure}[t]
\centering
\includegraphics[width=0.55\textwidth]{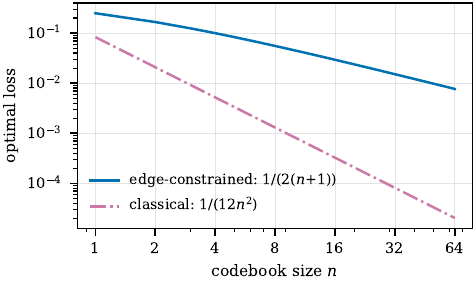}
\caption{Example~\ref{ex:toy}: uniform source, linear quality. The edge constraint alone costs an order of magnitude in the decay exponent.}
\label{fig:toy}
\end{figure}

Table~\ref{tab:compare} compares the two theories side by side, and Fig.~\ref{fig:binning} (right) shows the practical difference: $\sqrt{p\,Q'}$ discounts regions where quality has saturated and pushes rungs toward low-to-middle bandwidths, where $p^{1/3}$, blind to saturation, would keep placing them high.

\begin{table}[t]
\centering
\caption{Classical high-resolution quantization vs.\ the edge-constrained case.}
\label{tab:compare}
\footnotesize
\setlength{\tabcolsep}{3pt}
\begin{tabular}{lll}
\toprule
 & Classical (centroid, sq.\ err.) & ABR ladder (edge-constrained) \\
\midrule
per-cell loss & $p\,\Delta^3/12$ & $w\,\Delta^2/2$, \ $w=p\,Q'$ \\
point density & $\propto p^{1/3}$ (Panter--Dite) & $\propto \sqrt{p\,Q'}$ \\
optimal decay & $\frac{1}{12n^2}\bigl(\int p^{1/3}\bigr)^3$ & $\frac{1}{2n}\bigl(\int\sqrt{p\,Q'}\bigr)^2$ \\
deg.\ of freedom & $2n$ (bound.\ $+$ points) & $n$ (boundaries only) \\
\bottomrule
\end{tabular}
\end{table}

\section{Design Rules and Numerical Verification}
\label{sec:rules}

The law gives the designer three closed-form rules. To verify them, we compare against exact optima: solutions of~\eqref{eq:design} computed by the dynamic program of Section~\ref{sec:dp} on a fine rate lattice, under the models of Section~\ref{sec:model} with SSIM quality---six content--network configurations in all, with Complex content on Network 1 as the running example. The box constraints are dropped throughout, matching the setting in which the law was derived.

\subsubsection*{Rule 1: rung placement without optimization}
The construction inside the proof of Proposition~\ref{thm:law} is the recipe: compute $\Lambda(x)=\frac1W\int_0^x\sqrt{w(B)}\dd B$, where $W=\int_0^b\sqrt{w(B)}\dd B$, and set $R_i=\Lambda^{-1}(i/(n{+}1))$---the \emph{quantile rule}~\eqref{eq:quantile}: the rates split $[0,b]$ into $n{+}1$ cells of equal $\sqrt{w}$-mass. Every ingredient is a one-dimensional integral of known model functions, so the ladder comes out in closed form.

Table~\ref{tab:quantile} tests the rule against the exact optima on both networks, listing the ladders with their delivered qualities and gaps. The rule's rates sit above the optimal ones, and the penalty shrinks as $n$ grows: in quality terms the rule gives up under $0.005$ SSIM from $n=4$ on, and its excess gap keeps falling ($3\%$ of the optimum at $n=16$, $1.3\%$ at $n=32$, on Network 1). The other four configurations behave the same way. From $n\approx8$ on the quantile ladder is, for engineering purposes, the optimal ladder.

\begin{table}[t]
\centering
\caption{The quantile rule~\eqref{eq:quantile} against exact optimal ladders and the law's estimate $C_w/n$ (Complex content, SSIM).}
\label{tab:quantile}
\small
\setlength{\tabcolsep}{4.5pt}
\begin{tabular}{cllccc}
\toprule
$n$ & Ladder & Rates [kbps] & $\Qbar$ & $\gap_n$ & $C_w/n$ \\
\midrule
\multicolumn{6}{l}{\emph{Network 1} ($\Qinf=0.9460$, $C_w=0.1076$)} \\
\midrule
2 & optimal & 535, 1864 & 0.8819 & 0.0640 & 0.0538 \\
  & rule~\eqref{eq:quantile} & 935, 2723 & 0.8629 & 0.0830 & \\
3 & optimal & 386, 1139, 2680 & 0.9047 & 0.0413 & 0.0359 \\
  & rule~\eqref{eq:quantile} & 658, 1646, 3566 & 0.8963 & 0.0496 & \\
4 & optimal & 307, 832, 1699, 3427 & 0.9158 & 0.0302 & 0.0269 \\
  & rule~\eqref{eq:quantile} & 511, 1191, 2226, 4281 & 0.9113 & 0.0346 & \\
5 & optimal & 257, 659, 1257, 2187, 4097 & 0.9223 & 0.0237 & 0.0215 \\
  & rule~\eqref{eq:quantile} & 420, 935, 1646, 2723, 4899 & 0.9196 & 0.0264 & \\
6 & optimal & 223, 548, 1001, 1638, 2614, 4633 & 0.9265 & 0.0194 & 0.0179 \\
  & rule~\eqref{eq:quantile} & 358, 772, 1311, 2043, 3164, 5436 & 0.9247 & 0.0212 & \\
8 & optimal & 178, 414, 718, 1106, 1613, 2299, 3363, 5472 & 0.9317 & 0.0143 & 0.0134 \\
  & rule~\eqref{eq:quantile} & 279, 575, 935, 1381, 1948, 2723, 3936, 6310 & 0.9307 & 0.0152 & \\
\midrule
\multicolumn{6}{l}{\emph{Network 2} ($\Qinf=0.9773$, $C_w=0.052$)} \\
\midrule
2 & optimal & 911, 3852 & 0.9410 & 0.0363 & 0.0260 \\
  & rule~\eqref{eq:quantile} & 1915, 6173 & 0.9213 & 0.0560 & \\
3 & optimal & 634, 2211, 5736 & 0.9549 & 0.0223 & 0.0173 \\
  & rule~\eqref{eq:quantile} & 1289, 3588, 8193 & 0.9468 & 0.0305 & \\
4 & optimal & 492, 1549, 3497, 7479 & 0.9613 & 0.0160 & 0.0130 \\
  & rule~\eqref{eq:quantile} & 968, 2509, 4979, 9906 & 0.9572 & 0.0201 & \\
5 & optimal & 408, 1195, 2510, 4672, 9072 & 0.9650 & 0.0123 & 0.0104 \\
  & rule~\eqref{eq:quantile} & 775, 1915, 3588, 6173, 11405 & 0.9626 & 0.0147 & \\
6 & optimal & 351, 979, 1960, 3435, 5736, 10549 & 0.9673 & 0.0100 & 0.0087 \\
  & rule~\eqref{eq:quantile} & 647, 1542, 2792, 4541, 7232, 12723 & 0.9657 & 0.0115 & \\
8 & optimal & 277, 717, 1353, 2228, 3404, 5034, 7580, 12756 & 0.9700 & 0.0072 & 0.0065 \\
  & rule~\eqref{eq:quantile} & 490, 1106, 1915, 2957, 4313, 6173, 9080, 14906 & 0.9693 & 0.0080 & \\
\bottomrule
\end{tabular}
\end{table}

\subsubsection*{Rule 2: how many renditions}
Inverting $\gap_n\approx C_w/n$: reaching a tolerance $\varepsilon$ of the limit requires $n(\varepsilon)\approx C_w/\varepsilon$. This turns the perennial ``how many rungs?'' question into a calculation.

Table~\ref{tab:cw} shows the constants and the resulting ladder sizes for all six configurations. For Complex content on Network 1 ($C_w=0.108$), coming within $0.01$ SSIM of the limit requires $n\approx11$ renditions; for Easy content ($C_w=0.020$), two suffice. The exact gaps (Fig.~\ref{fig:gapvsn}) confirm both predictions within one rendition.

\subsubsection*{Rule 3: the economic ladder size}
Adding renditions costs money, and so does the quality lost to a coarse ladder. Let $\kappa$ be the operational cost of one rendition (encoding, storage, cache footprint), and let the impact of delivering suboptimal quality be linear in the quality gap, $\lambda\,\gap_n$, with both priced in the same currency. Since $\gap_n\approx C_w/n$, the total cost is $\lambda C_w/n+\kappa n$, minimized at $n^*=\sqrt{\lambda C_w/\kappa}$. The square root brings insensitivity: misestimating either price by $2\times$ moves $n^*$ by only $\sqrt2$.

\begin{table}[t]
\centering
\caption{The law's constants $C_w$ (SSIM) and the ladder sizes $n(\varepsilon)\approx C_w/\varepsilon$ needed to keep the quality gap within a tolerance $\varepsilon$.}
\label{tab:cw}
\small
\begin{tabular}{lccc}
\toprule
Configuration & $C_w$ & $n(0.01)$ & $n(0.005)$ \\
\midrule
Easy / Network 1    & 0.020 & 2 & 4 \\
Medium / Network 1  & 0.080 & 8 & 16 \\
Complex / Network 1 & 0.108 & 11 & 22 \\
Easy / Network 2    & 0.010 & 1 & 2 \\
Medium / Network 2  & 0.039 & 4 & 8 \\
Complex / Network 2 & 0.052 & 5 & 10 \\
\bottomrule
\end{tabular}
\end{table}

\subsubsection*{The law against ground truth}
Figure~\ref{fig:gapvsn} compares the prediction $C_w/n$ with exact optimal gaps for $n$ up to 64: the optimal-gap curves descend with slope $-1$ and converge onto their asymptotes from above. Quantitatively (Complex, Network 1), the ratio of true optimal gap to $C_w/n$ is $1.06$ at $n=8$, $1.03$ at $n=16$, and $1.008$ at $n=48$. By $n\approx8$ the law is an engineering-grade estimate, and the estimate is \emph{one integral}. The $1/n$ decay explains the slow returns practitioners observe: doubling the rung count halves the gap rather than quartering it. The Network-2 constants are \emph{smaller}---on a faster network more traffic sits where quality has saturated---yet the optimal rungs roughly double at every $n$, tracking the bandwidth quantiles. Table~\ref{tab:quantile} shows the exact optima for both networks alongside the law's prediction.

\begin{figure}[t]
\centering
\includegraphics[width=0.5\textwidth]{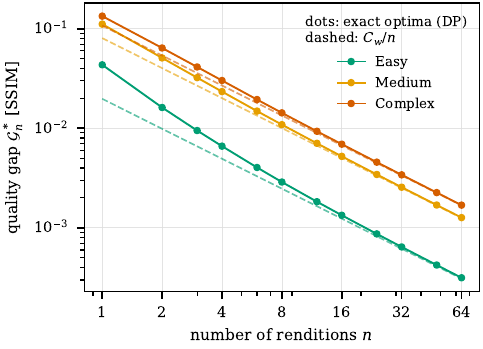}
\caption{The $\Theta(1/n)$ law against exact optima (Network 1, all contents). Dots: gaps of optimal ladders (dynamic programming); dashed: $C_w/n$ from Table~\ref{tab:cw}.}
\label{fig:gapvsn}
\end{figure}

\section{Extensions}
\label{sec:ext}

The analysis so far fixed one encoding resolution and took an objective metric at face value. We now remove both simplifications, and a third---that clients adapt to bandwidth alone---using the models of~\cite{Reznik2021DCC,Reznik2021PCS,Barman2022EUVIP}; Fig.~\ref{fig:ext} and Table~\ref{tab:models2} summarize their forms, parameters, and origin. How the solution survives these changes is itself a result: \emph{the mathematics is invariant; only the constants move.}

\begin{table}[!t]
\centering
\caption{The models used in Fig.~\ref{fig:ext} and the numerical examples of the extensions.}
\label{tab:models2}
\footnotesize
\begin{tabular}{@{}p{0.155\textwidth}p{0.45\textwidth}p{0.33\textwidth}@{}}
\toprule
Model & Form used in the examples & Origin and fitting \\
\midrule
Resolution-dependent quality $Q(H,R)$ & The quality--rate functions of Table~\ref{tab:models} with resolution-dependent parameter $\rho_c(H)=\alpha_c H^{\beta_c}$. ``Easy'': $\alpha_c=0.7844\cdot10^{-3}$, $\beta_c=1.2281$; ``Medium'': $\alpha_c=0.8278\cdot10^{-2}$, $\beta_c=1.3217$; ``Complex'': $\alpha_c=0.07316$, $\beta_c=1.0957$. & Fitted through a few dozen probe encodes per sequence; this model was shown to predict SSIM of H.264-encoded video reasonably well~\cite{Reznik2021DCC}. \\[2pt]
Separable perceptual quality model $S(H,H_p)\,Q(H,R)$ & Model~\eqref{eq:sep}: the factor $S(H,H_p)$ is derived from the Westerink--Roufs model~\cite{WesterinkRoufs1989}, evaluated at the window's viewing angle and the video's effective spatial frequency; the codec factor is a monotone remapping of the quality--rate model above onto the 1--5 MOS scale. & Proposed in~\cite{Reznik2021DCC}; fitted to subjective scores and shown to perform well across several datasets~\cite{Reznik2021DCC,Barman2022EUVIP}. \\[2pt]
Client resolution thresholds & Rule~\eqref{eq:clientH} with fitted $\alpha^*=0.723$. & Fitted to observed web-player behavior~\cite{Reznik2021PCS}. \\[2pt]
Web-audience window heights & Eleven observed heights $H_p$ with probabilities $q_{H_p}$: (228, .103), (240, .018), (380, .063), (430, .027), (480, .481), (630, .038), (678, .084), (710, .018), (774, .033), (810, .052), (990, .084); mean 538 lines. & Measured session statistics~\cite{Reznik2021PCS}. \\
\bottomrule
\end{tabular}
\end{table}

\begin{figure}[t]
\centering
\includegraphics[width=0.97\textwidth]{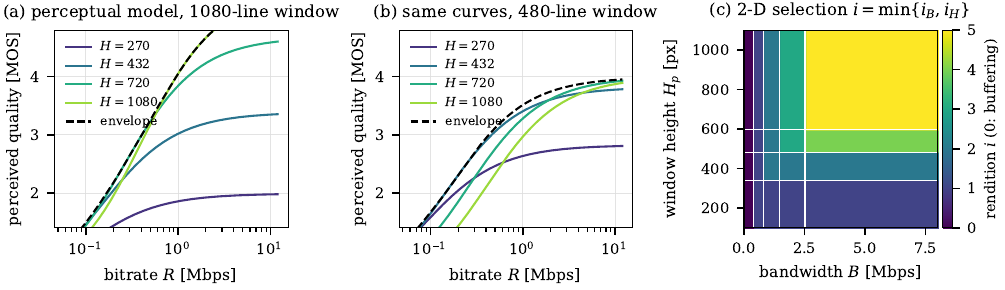}
\caption{The extensions in one view (Complex content). (a) Perceptual quality at four encoding resolutions in a 1080-line window: the curves cross, and the achievable frontier is their upper envelope (dashed). (b) The same encodings in a 480-line window: the window caps attainable quality---saturation. (c) The two-dimensional client rule $i=\min\{i_B,i_H\}$ for the five-rendition web-audience ladder of Table~\ref{tab:designs2d}: L-shaped cells bounded by two independent threshold chains (white lines); the dark band below $R_1$ is buffering.}
\label{fig:ext}
\end{figure}

\subsection{Resolutions and perceptual quality}
\label{sec:env}

Real ladders vary the encoding resolution $H$ because at any bitrate there is a best resolution to spend it on: at low rates a smaller, softer picture wins; at high rates it hits its detail ceiling. The quality--rate model becomes a family $Q(H,R)$---one curve per encoding resolution $H$, each satisfying the assumptions~\eqref{eq:codec}.

More broadly, we can also introduce a \emph{perceptual quality model} $Q(H,H_p,R)$: the quality of a video encoded at resolution $H$ and rate $R$, as seen in a window of height $H_p$ and with some other known parameters (display density, display brightness, distance between the viewer and the screen, etc.). Two properties of such a model carry all subsequent mathematics. The first is \emph{separability}:
\begin{equation}
Q(H,H_p,R) \;=\; S(H,H_p)\,Q(H,R),
\label{eq:sep}
\end{equation}
a product of a scaling and reproduction quality factor $S(H,H_p)$ and the codec-level model $Q(H,R)$. The second is \emph{saturation} due to the window limit: $S(H,H_p)=S(H_p,H_p)$ for $H\ge H_p$ (Fig.~\ref{fig:ext}(b)). The achievable quality--rate frontier for a window is the upper \emph{envelope} of the family, $Q^*(R)=\max_H Q(H,H_p,R)$, dashed in Fig.~\ref{fig:ext}(a,b). In all our computations we use the model proposed in~\cite{Reznik2021DCC}, with characteristics summarized in Table~\ref{tab:models2}. This model reports results using the 1--5 mean-opinion-score (MOS) scale, which we adopt in the presentation of all perceptual quality results.

\subsection{The envelope reduction}
\label{sec:envred}

Under a bandwidth-only client and a single viewing condition---a fixed window height $H_p$---the enlarged design problem collapses back onto the one just solved. The loads $p_i=F(R_{i+1})-F(R_i)$ depend on the rates alone, so the average quality $\Qbar=\sum_{i=1}^n p_i\,Q(H_i,H_p,R_i)$ is maximized over resolutions term by term, by $H_i\in\argmax_H Q(H,H_p,R_i)$, and the design problem reduces to~\eqref{eq:design} with the envelope $Q^*(R)$ in place of $Q(R)$. (Resolutions must be nondecreasing along the ladder; the maximizer is nondecreasing in $R$ for all fitted models used here, so this holds automatically.) Everything of the preceding sections then applies to $Q^*$: the identification, the dynamic program---run once on the envelope, with resolutions read off the argmax---and Proposition~\ref{thm:law}. The envelope is nondecreasing but only piecewise $C^1$---it has kinks where the curves cross---so the proposition applies after splitting $[0,b]$ at the finitely many kinks.

\subsection{Two-dimensional adaptation: player size}
\label{sec:2d}

On the web, video plays in a window of layout-determined height $H_p$, distributed broadly across sessions (Table~\ref{tab:models2}). Fetching more lines than the window can display wastes bits, so web players \emph{cap} their selection by the window. The fitted client model~\cite{Reznik2021PCS} adds a second staircase, mirroring the bandwidth rule~\eqref{eq:client} over the ladder's resolutions $H_1\le\dots\le H_n$:
\begin{equation}
i_H(H_p) = j \;\Longleftrightarrow\; T^H_{j-1} \le H_p < T^H_j,
\qquad
T^H_j=\alpha H_j+(1{-}\alpha)H_{j+1},
\label{eq:clientH}
\end{equation}
for $j=1,\dots,n$, with the conventions $T^H_0=0$ and $T^H_n=\infty$: each switching threshold is a weighted average of two consecutive ladder resolutions, with a fitted weight $\alpha\in[0,1]$. The two staircases compose by taking the more restrictive one,
\begin{equation}
i(B,H_p) = \min\{\, i_B(B),\; i_H(H_p) \,\},
\label{eq:client2d}
\end{equation}
so bandwidth proposes and the window caps. The audience splits by window height into \emph{audience segments}, each seeing the ladder truncated at its own cap---and the segments disagree about where rungs should go. Under the min-composition, rendition $i$ is delivered on an \emph{L-shaped} region bounded by two \emph{independent} families of lines: vertical at the rates $R_j$, horizontal at the thresholds $T^H_j$ (Fig.~\ref{fig:ext}(c)).

The average delivered quality now runs over both variables. With the window height $H_p$ taking its observed values with probabilities $q_{H_p}$, and with the selection rule~\eqref{eq:client2d},
\begin{equation}
\Qbar \;=\; \sum_{H_p} q_{H_p} \int_0^\infty p(B)\; Q\bigl(H_{i(B,H_p)},\,H_p,\,R_{i(B,H_p)}\bigr)\,\dd B,
\label{eq:Q2d}
\end{equation}
where buffering ($i=0$) contributes zero quality, as before. This looks like a genuinely two-dimensional design problem. It is not: the L-shaped cell geometry and the separability of the quality model collapse it.

\begin{theorem}[Exact reduction;~\cite{Reznik2021DCC}]
\label{thm:red}
For any separable quality model~\eqref{eq:sep}, the two-dimensional average quality~\eqref{eq:Q2d} equals
\begin{equation}
\Qbar = \sum_{i=1}^n Q(H_i,R_i)\Bigl[ A_i\bigl(F(t_i)-F(t_{i-1})\bigr) + C_i\bigl(1-F(t_{i-1})\bigr)\Bigr],
\label{eq:reduced}
\end{equation}
where $t_{i-1}=R_i$ for $i=1,\dots,n$, $t_n=\infty$, and the audience weights
\begin{equation}
A_i=\sum_{H_p:\, i_H(H_p)>i} q_{H_p}\,S(H_i,H_p),
\qquad
C_i=\sum_{H_p:\, i_H(H_p)=i} q_{H_p}\,S(H_i,H_p),
\label{eq:weights}
\end{equation}
collect the segments whose cap exceeds $i$ and the segments capped at $i$; both depend on the resolutions alone.
\end{theorem}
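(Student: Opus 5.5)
The plan is to fix one audience segment, compute its contribution to~\eqref{eq:Q2d} in closed form, and then exchange the order of summation over segments and renditions. Everything is a finite sum of integrals over intervals of $B$, so no analytic subtleties arise. The only care needed is in the bookkeeping of which rung a capped segment actually receives.

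\emph{Step 1 (one segment).} Fix a window height $H_p$ and let $c=i_H(H_p)$. By~\eqref{eq:clientH} with $T^H_0=0$ and $T^H_n=\infty$, $c\in\{1,\dots,n\}$. By~\eqref{eq:client}, $i_B(B)=j$ exactly on $[R_j,R_{j+1})$ with $R_0=0$ and $R_{n+1}=\infty$. Hence the composed index $\min\{i_B(B),c\}$ behaves as follows:
\begin{itemize}
\item it equals $0$ on $[0,R_1)$, which contributes nothing;
\item it equals $i$ on $[R_i,R_{i+1})$ for $1\le i<c$;
\item it equals $c$ on all of $[R_c,\infty)$, because every $j\ge c$ is capped to $c$.
\end{itemize}
This is the L-shaped cell geometry read along one horizontal line of Fig.~\ref{fig:ext}(c). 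Separability~\eqref{eq:sep} lets us write the quality on each piece as $S(H_i,H_p)\,Q(H_i,R_i)$, and this is constant in $B$. The segment's integral therefore equals
\[
\sum_{i=1}^{c-1} S(H_i,H_p)\,Q(H_i,R_i)\bigl[F(R_{i+1})-F(R_i)\bigr] \;+\; S(H_c,H_p)\,Q(H_c,R_c)\bigl[1-F(R_c)\bigr].
\]

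\emph{Step 2 (exchange sums).} Multiply the segment's integral by $q_{H_p}$ and sum over the observed heights, then regroup by rendition index $i$. For a fixed $i$, the factor $Q(H_i,R_i)$ is common to all segments. The segments contribute to rung $i$ in two ways:
\begin{itemize}
\item segments with $i_H(H_p)>i$ contribute the band term $q_{H_p}S(H_i,H_p)[F(R_{i+1})-F(R_i)]$, and summing these gives $A_i$;
\item segments with $i_H(H_p)=i$ contribute the tail term $q_{H_p}S(H_i,H_p)[1-F(R_i)]$, and summing these gives $C_i$;
\item segments with $i_H(H_p)<i$ never deliver rung $i$ and contribute nothing.
\end{itemize}
Renaming $t_{i-1}=R_i$ and $t_i=R_{i+1}$ (so $t_n=R_{n+1}=\infty$) produces exactly~\eqref{eq:reduced} with weights~\eqref{eq:weights}. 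The weights depend only on the resolutions, because the thresholds $T^H_j$ do and $S$ does.

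\emph{Main obstacle.} There is no real analytic difficulty here. The step most likely to go wrong is the boundary bookkeeping in Step~1. I would check it explicitly in three places:
\begin{itemize}
\item the top rung $i=n$, where the band term vanishes because $F(t_n)=1$ and no segment has cap exceeding $n$, so $A_n=0$;
\item the cap $c=1$, where the segment sees only buffering followed by rung $1$ on $[R_1,\infty)$;
\item the convention that the buffering region contributes zero for every segment regardless of its cap.
\end{itemize}
It is also worth confirming that the interchange of the finite sums is legitimate. It is, because the set of window heights is finite and each integrand is bounded.
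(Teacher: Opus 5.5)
Your proposal is correct and follows essentially the same route as the paper. Like the paper, you fix one window segment with cap $k=i_H(H_p)$, read off ordinary bandwidth cells below the cap and a single collapsed tail $[R_k,\infty)$ at the cap, and factor the quality by separability. You then exchange the finite sums over $H_p$ and $i$ and collect the terms into $A_i$ and $C_i$.
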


\begin{proof}
Write $k=i_H(H_p)$ for the cap of the segment with window height $H_p$---by~\eqref{eq:clientH} a function of the resolutions alone. Under the rule~\eqref{eq:client2d}, a client of this segment receives rendition $i<k$ exactly when $R_i\le B<R_{i+1}$, and rendition $k$ whenever $B\ge R_k$: below the cap the cells are the ordinary bandwidth cells, and everything above $R_k$ collapses into one capped tail. By separability~\eqref{eq:sep}, the quality delivered to this segment by rendition $i$ is $S(H_i,H_p)\,Q(H_i,R_i)$. Substituting into~\eqref{eq:Q2d} and splitting the inner integral at the rates,
\[
\Qbar=\sum_{H_p} q_{H_p}\Bigl[\,\sum_{i<k} S(H_i,H_p)\,Q(H_i,R_i)\,p_i \;+\; S(H_k,H_p)\,Q(H_k,R_k)\bigl(1-F(R_k)\bigr)\Bigr].
\]
Exchanging the sums over $H_p$ and $i$ and collecting the terms at each rendition gives~\eqref{eq:reduced} with the weights~\eqref{eq:weights}. Separability is what closes the audience sum into these finite weights: without the factorization~\eqref{eq:sep}, the sum over $H_p$ would remain inside the bandwidth integral. Both weights depend on the resolutions alone---through $H_i$ in $S$, and through the thresholds~\eqref{eq:clientH} in $i_H$.
\end{proof}

\emph{No two-dimensional integral survives}: bandwidth enters only through $F$ at the rates, player size only through the finite sums $A_i,C_i$, and the objective is chain-structured in the rates. Design becomes an outer enumeration over monotone resolution chains around the inner dynamic program of Section~\ref{sec:dp}. \begin{table}[t]
\centering
\caption{Optimal five-rendition two-dimensional ladders for two audiences (Complex content, Network 1, perceptual quality).}
\label{tab:designs2d}
\small
\begin{tabular}{ll}
\toprule
Audience & Optimal renditions $(H_i,\,R_i)$ [lines, kbps] \\
\midrule
Web audience (heights of Table~\ref{tab:models2}) & (288, 398), (480, 805), (480, 1463), (480, 2496), (900, 2550) \\
Full screen, 1080-line windows & (480, 283), (900, 739), (1080, 1402), (1080, 2391), (1080, 4443) \\
\bottomrule
\end{tabular}
\end{table}

The designs are visibly audience-specific (Table~\ref{tab:designs2d}): the web audience's optimum uses resolutions 288--900 only, with no 1080-line rung, and tops out at $2550$ kbps where the full-screen optimum reaches $4443$. Rungs follow the audience's joint mass in (window, bandwidth), not the display technology's maximum.

The asymptotics transfer conditionally. A single-segment audience sees the one-dimensional problem of the envelope reduction, so Proposition~\ref{thm:law} applies with the envelope constant ($C_w=1.02$ MOS for Complex content on Network 1), and $n\,\gap_n^*$ descends toward it from above, as in Fig.~\ref{fig:gapvsn}. For mixed audiences no single constant applies, but the range can be bracketed: if all rungs were shared by all segments, the audience-averaged loss density would give $n\gap_n\to0.49$ MOS; if every segment required private rungs, a Cauchy--Schwarz allocation over the segments' own law constants $C_w(H_p)$ gives $\bigl(\sum_{H_p}\sqrt{C_w(H_p)}\bigr)^2=4.1$ MOS. Measured exact optima run between these brackets, with $n\,\gap_n^*$ still growing through $n=10$ and an effective log--log slope of the gap around $-0.4$: for realistic web audiences, adding rungs pays even less than the $1/n$ law suggests, and \emph{which segment} the next rung serves matters more than adding it. The joint limit---segments, caps, and $n$ interacting---is the principal open problem.

\subsection{What else transfers}
The same argument covers more: richer client policies replace hard cells by kernel-weighted soft ones; mixtures of audiences or networks enter linearly through $p$; multi-codec ladders (H.264, HEVC, AV1)~\cite{Reznik2019ICME} segment the audience by decoder capability exactly as window size does, and Theorem~\ref{thm:red} extends term by term. In each case the identification, the dynamic program, and the $\Theta(1/n)$ law with recomputed constants are expected to carry over.

\section{Conclusion}

Ladder design turned out to be scalar quantization of the bandwidth distribution, of an edge-constrained, one-sided kind, and the classical high-resolution answers reshape accordingly: the optimal gap decays as $\Theta(1/n)$ with explicit constant $C_w$, rates should follow $\sqrt{p\,Q'}$, quantile designs land a few percent from optimal, and ``how many rungs?'' has a closed-form answer. The structure is not streaming's alone---it recurs in the assortment, approximation, and link-adaptation problems of the introduction, where its core asymptotics were, in deterministic form, first discovered. What this paper contributes is the connection: the identification, the probabilistic Zador form with its explicit constant, and the design calculus. The open problems we find most inviting are the Zador-general version of the law, including the atoms that rate-capped service tiers produce; converses under richer client models; and the joint multi-segment limit.

\end{document}